\newcommand{\bsymb}{
{\begin{tikzpicture}
\filldraw[fill=white,draw=black] circle (2.8pt);
\end{tikzpicture}}
}
\newcommand{\hatbsymb}{
{\begin{tikzpicture}
\filldraw[fill=gray!90,draw=black] circle (2.8pt);
\end{tikzpicture}}
}
\newcommand{\asymb}{
{\begin{tikzpicture}
\filldraw[fill=white,draw=black] rectangle (5pt,5pt);
\end{tikzpicture}}
}
\newcommand{\hatasymb}{
{\begin{tikzpicture}
\filldraw[fill=gray!90,draw=black] rectangle (5pt,5pt);
\end{tikzpicture}}
}
\newcommand{\subalign}[1]{%
  \vcenter{%
    \Let@ \restore@math@cr \default@tag
    \baselineskip\fontdimen10 \scriptfont\tw@
    \advance\baselineskip\fontdimen12 \scriptfont\tw@
    \lineskip\thr@@\fontdimen8 \scriptfont\thr@@
    \lineskiplimit\lineskip
    \ialign{\hfil$\m@th\scriptstyle##$&$\m@th\scriptstyle{}##$\hfil\crcr
      #1\crcr
    }%
  }%
}
\title{Erratum to ``Frequency Linear-time Temporal Logic''}
\author{
Benedikt Bollig, Normann Decker, and Martin Leucker
}
\date{}
      \theoremstyle{plain}
      \newtheorem{theorem}{Theorem}
      \newtheorem{lemma}{Lemma}
      \newtheorem{claim}{Claim}
      \theoremstyle{definition}
      \theoremstyle{remark}
      \newtheorem{remark}{Remark}
\renewcommand{\epsilon}{\varepsilon}
\colorlet{nred}{black}
\colorlet{nblue}{black}
\colorlet{ngreen}{black}
\newcommand{\Rat}{\mathbb{Q}}
\newcommand{\fltl}{\textrm{\textit{f}LTL}\xspace}
\newcommand{\U}{\ensuremath{\operatorname{\mathsf{U}}}\xspace}
\newcommand{\myU}[1]{\mathrel{\mathsf{U}}^{\ge #1}}
\newcommand{\X}{\ensuremath{\operatorname{\mathsf{X}}}\xspace}
\newcommand{\G}{\ensuremath{\operatorname{\mathsf{G}}}\xspace}
\newcommand{\F}{\ensuremath{\operatorname{\mathsf{F}}}\xspace}
\newcommand{\true}{\top}
\newcommand{\false}{\bot}
\newcommand{\op}{\alpha}
\newcommand{\IS}{\mathsf{OP}}
\renewcommand{\phi}{\varphi}
\newcommand{\N}{\mathbb{N}}
\newcommand{\occnumb}[2]{|\!|#1,#2|\!|}
\newcommand{\inc}{\ensuremath{\mathsf{inc}}\xspace}
\newcommand{\testz}{\ensuremath{\mathsf{zero}}\xspace}
\newcommand{\dec}{\ensuremath{\mathsf{dec}}\xspace}
\newcommand{\sufx}[2]{{#1}|^{#2}}
\newcommand{\loc}{\ell}
\newcommand{\Loc}{\mathit{Loc}}
\newcommand{\Trans}{T}
\newcommand{\src}{\mathit{src}}
\newcommand{\trg}{\mathit{trg}}
\newcommand{\opt}{\mathit{op}}
\newcommand{\valone}{m}
\newcommand{\valtwo}{n}
\newcommand{\ttrans}[1]{\xrightarrow{#1}}
\newcommand{\enc}[1]{\mathsf{enc}(#1)}
\newcommand{\lastsymb}{\#}
\newcommand{\phisymb}{\phi_{\mathsf{symb}}}
\newcommand{\phicount}{\phi_{\mathsf{count}}}
\newcommand{\WF}{L_{\mathsf{symb}}}
\newcommand{\sep}{\$}
\newcommand{\sepvar}{\sigma}
\newcommand{\type}{\tau}
\newcommand{\lasttrans}{t_\text{final}}
\newcommand{\dist}{\,\!\!\!\!}
\newcommand{\extrans}[3]{\begin{array}[t]{c}#1\\|\\[-0.7ex]|\\\makebox[10pt][c]{\footnotesize$#2$}\\[-0.6ex]{\textcolor{red}{\scalebox{0.6}{#3}}}\end{array}}
\newcommand{\extsep}[2]{\begin{array}[b]{c}\textcolor{red}{\scalebox{0.6}{#2}}\\|\\[-0.6ex]|\\#1\end{array}}
\newcommand\Azero{A_\mathsf{zero}}
\newcommand\Aczero{A_{\overline{\mathsf{zero}}}}
\newcommand{\Atuples}{\mathit{Tuples}_A}
\newcommand{\Btuples}{\mathit{Tuples}_B}
\newcommand{\Atypes}{\mathit{Types}_A}
\newcommand{\Btypes}{\mathit{Types}_B}
\newcommand{\myimplies}{\mathrel{\text{implies}}}
\renewcommand{\qedsymbol}{\textup{q.e.d.}}
\renewcommand{\implies}{\mathrel{\Longrightarrow}}
\begin{document}

\maketitle

\begin{abstract}
We correct our proof of a theorem
stating that satisfiability of 
frequency linear-time temporal logic is undecidable [TASE 2012].
\end{abstract}


\section{Introduction}

In \cite{BolligDL12}, we introduced \emph{frequency linear-time temporal logic} (\fltl),
a quantitative extension of LTL over infinite words.
We stated that the logic has an undecidable satisfiability problem.
While the result is valid, the proof (a reduction from two-counter Minsky machines)
as described in \cite{BolligDL12} is flawed\footnote{This was
brought to our attention by Paul Gastin, whom we thank for his observation.}
 (cf.\ Remark~\ref{rem:main} in this note for an explanation).
Fortunately, it is fixable (cf.\ \cite{BouyerMM14} for a brief discussion), but
the corrected proof has not yet been presented.
This note rectifies this.
We take the opportunity to also unify and simplify our original reduction,
though the main idea remains essentially the same.

\paragraph{Related work.}

The logic \fltl is inspired by availability languages \cite{HoenickeMO10}.
It has recently been used for probabilistic systems
\cite{ForejtK15,ForejtKK15,PiribauerB20}. A related logic,
containing averaging modalities, was introduced
in \cite{BouyerMM14}. Recent work shows that already
very restricted frequency logics have an undecidable satisfiability
problem, e.g., the logic where the only temporal modalities are
\emph{future} and a \emph{majority past operator}~\cite{abs-2007-01233}.

\paragraph{Outline.}
In Section~\ref{sec:fLTL}, we recall the definition of \fltl and
state undecidability of its satisfiability problem.
In Section~\ref{sec:reduction}, we provide the reduction from the reachability problem in two-counter Minsky machines to the satisfiability problem of \fltl. 
Correctness of the reduction is shown in
Section~\ref{sec:correctness}.

\section{Frequency Linear-Time Temporal Logic}
\label{sec:fLTL}

\paragraph{Preliminaries.}
The set of natural numbers is $\N = \{0,1,2,\ldots\}$.
The cardinality of a finite set $S$ is denoted by $|S|$.

Let $\Sigma$ be an alphabet, i.e., a nonempty finite set.
The set of finite words over $\Sigma$ is denoted by $\Sigma^\ast$,
and the set of countably infinite words by $\Sigma^\omega$.
For a finite word $w = a_0 \ldots a_{k-1} \in \Sigma^\ast$ (where $a_i \in \Sigma$),
we denote by $|w|$ the length $k$ of $w$. In particular, $|\epsilon| = 0$
where $\epsilon$ denotes the empty word.
Given $A \subseteq \Sigma$ and $w = a_0 \ldots a_{k-1} \in \Sigma^\ast$,
we let $\occnumb{w}{A} = |\{i \in \{0,\ldots,k-1\}: a_i \in A\}|$
denote the number of occurrences of letters from $A$ in $w$.

For an infinite 
word $w = a_0a_1a_2\ldots \in \Sigma^\omega$
and $i \in \N$, we let
$\sufx{w}{i} = a_ia_{i+1}a_{i+2} \ldots$ denote the suffix of $w$
starting from position $i$.

\paragraph{Frequency Linear-Time Temporal Logic.}
The idea of \emph{frequency linear-time temporal logic} (\fltl) is to allow for relaxation of the until operator in terms of 
an annotated frequency.
The usual intuition for a formula $\phi \U \psi$ is that $\psi$ must hold at some
point in the future, and before that $\phi$ has to hold \emph{always}. 
Instead of ``always'', we consider the less strict formulation ``sufficiently often''
referring to a minimum ratio $r\in[0,1]$ of positions, the \emph{frequency} of 
$\phi$.

  The syntax of \fltl formulas over $\Sigma$ is 
  given by
  $$
    \varphi ::= a\ |\ \X\varphi\ |\ \varphi \myU{r} \varphi\ |\ \neg\varphi\ |\ \varphi \land \varphi
  $$
  where $a$ ranges over $\Sigma$ and each $\U$-operator is annotated by a rational number
  $r\in \Rat$ with $0\leq r \leq 1$. \fltl formulas are interpreted over words
  $w=a_0a_1a_2 \ldots\in\Sigma^\omega$, as follows:
  \[\begin{array}{lcl}
    w \models a        & \text{if} & a_0 = a\\[1ex]
    w \models \X\phi   & \text{if} & \sufx{w}{1} \models \phi\\[1ex]
    w \models \neg\phi & \text{if} & w\not\models \phi\\[1ex]
    w \models \varphi \myU{r} \psi    & \textrm{if} & 
        \exists j \in \N: \sufx{w}{j} \models \psi \textrm{ and } |\{i \in \{0,\ldots,j-1\}: \sufx{w}{i} \models \phi \}| \geq r \cdot j\\[1ex]
    w \models \varphi \land \psi & \text{if} & 
        w \models \varphi \textrm{ and } w \models \psi
  \end{array}\]

Consider, for example $a \myU{\frac12} b$ and the word $w =
cbaabbc^\omega$. The observed
frequency of $a$ until positions 1 and 5 is $0$ and $\frac25$, respectively, which is
too low. Yet, at position 4, $b$ is satisfied \emph{and} the frequency
constraint before that position is met. Thus, $w$ is a model since the frequency constraint
is not necessarily required to hold at the first position where $b$
holds.

We use $\phi \U \psi$ as an abbreviation for $\phi\myU{1} \psi$,
which corresponds to the classical until operator.
Moreover, we use standard abbreviations such as
$\true = \bigvee_{a \in \Sigma} a$,
$\false = \neg\true$,
$\phi \vee \psi = \neg(\neg \phi\wedge\neg\psi)$,
$(\phi \implies \psi) = \neg\phi \vee \psi$,
$\F \phi = \true\U \phi$, and $\G \phi = \neg \F \neg \phi$.
Finally, given a set $A \subseteq \Sigma$, we use $A$ as a shorthand for
$\bigvee_{a \in A} a$.

\medskip

In the remainder of this note, we will show the following theorem
(Theorem~2 from \cite{BolligDL12}):

\begin{theorem}
\label{thm:undecidability}
The following problem is undecidable:
Given an alphabet $\Sigma$ and an \fltl formula
$\phi$ over $\Sigma$, is there a word $w \in \Sigma^\omega$
such that $w \models \phi$\,?
\end{theorem}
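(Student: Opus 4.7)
The plan is to reduce the reachability problem for two-counter Minsky machines, which is classically undecidable, to the satisfiability problem of fLTL. Given such a machine $M$ with a designated target instruction, I would construct an fLTL formula $\phi_M$ over an alphabet $\Sigma$ whose models are exactly the infinite words encoding a run of $M$ that passes through the target. The alphabet $\Sigma$ contains a symbol for each instruction of $M$, separator symbols delimiting configurations, and counter-token symbols whose local densities encode the current values of the two counters; a reaching run is padded indefinitely after the target so as to form an infinite word.

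The formula decomposes as $\phi_M := \phi_{\text{wf}} \land \phi_{\text{init}} \land \phi_{\text{trans}} \land \phi_{\text{reach}}$. The conjuncts $\phi_{\text{wf}}$, $\phi_{\text{init}}$, and $\phi_{\text{reach}} := \F \ell_{\text{target}}$ enforce syntactic well-formedness of configuration blocks, a correct initial configuration with both counters at zero, and reachability of the target; they all lie within the classical LTL fragment, i.e., $\myU{1}$. The arithmetic of the run is the province of $\phi_{\text{trans}}$: for each pair of consecutive configuration blocks it must guarantee the instruction-prescribed update to both counters, namely an increment or decrement by one on one counter while the other is unchanged, or a zero-test that fires only when the tested counter is zero. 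These relations are enforced by $\myU{r}$ formulas whose rational thresholds $r$, combined with the block layout, translate into precise counts of counter-token symbols in well-chosen windows. A single $\myU{r}$ only provides a lower bound on a ratio, so each equality is obtained as the conjunction of a lower-bound instance and an upper-bound instance placed on a complementary symbol.

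The main obstacle, and the origin of the flaw in \cite{BolligDL12}, is the design of $\phi_{\text{trans}}$ so that the global ratios constrained by $\myU{r}$ genuinely determine per-block counter values. Since $\myU{r}$ measures the fraction of satisfactions over a prefix whose length is itself a free existential witness, a careless encoding allows the same ratio to be realised by many different counter sequences. The fix is to choose the block structure and the rationals so that the windows picked out by the witness positions have controlled lengths, making the paired lower and upper bounds meet and thereby pin down the counts exactly. Once this is verified for the three non-trivial instruction types (increment, decrement, zero-test), the two standard correctness directions follow routinely: every run of $M$ reaching the target yields a word satisfying $\phi_M$, and every model of $\phi_M$ can be decoded back into such a run, so satisfiability of $\phi_M$ is equivalent to reachability in $M$.
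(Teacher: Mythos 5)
Your overall strategy coincides with the paper's: a reduction from two-counter Minsky reachability, a unary encoding of configurations interleaved with instruction and separator symbols, an infinite padding after the final transition, and exact cardinality constraints obtained by conjoining $\myU{\frac{1}{2}}$ lower bounds over complementary sets of letters (so that a lower bound on a set and on its complement together force exact halves; this is precisely Lemma~\ref{lem:equality}). You also correctly locate the source of the flaw in \cite{BolligDL12}: the witness position $j$ in the semantics of $\myU{r}$ is existentially quantified, so a frequency constraint does not by itself speak about a determined window.

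However, the point where you claim to repair this is exactly where the proposal has a genuine gap. ``Choosing the block structure and the rationals so that the windows picked out by the witness positions have controlled lengths'' is not a mechanism: as long as the right-hand side of $\myU{\frac{1}{2}}$ can hold at several positions, the semantics lets the model pick whichever witness makes the ratio come out, and no choice of block layout or threshold prevents that; in particular you cannot directly constrain a single configuration block this way. The paper's fix (Remark~\ref{rem:main}) has two ingredients that are absent from your plan. First, every frequency until is anchored at a \emph{globally unique} position --- $t_{\textup{final}}$ occurs exactly once, and $\mathit{last}_{\beta} = \sep_\beta \wedge \X\G\neg\sep_\beta$ holds at exactly one position --- so the witness $j$ is forced. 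Second, the resulting equalities then hold only for entire suffixes ending at that unique anchor, so per-block equalities must be recovered by a backward induction that subtracts consecutive suffixes (Claims~\ref{claim:equal-x} and~\ref{claim:equal-y}); your plan asserts per-transition correctness directly, skipping this step. A further necessary detail, which motivates the three separators $\sep_0,\sep_1,\sep_\testz$, is that the anchor, separator, and transition symbols themselves enter the counts and must be neutralized by pairing them off as complementary letters (alternating $\sep_0/\sep_1$, and $\sep_\testz$ matched with zero-test transitions). Without these ingredients the ``every model decodes to a run'' direction of your equivalence does not go through.
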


\section{Reduction from Minsky Machines to \fltl Satisfiability}

\label{sec:reduction}

To prove the theorem, we proceed by reduction from
the (undecidable) reachability problem of two-counter Minsky
machines \cite{minsky_67} to the satisfiability problem
of \fltl.

\subsection{Minsky Machines}

A Minsky machine is equpped with two nonnegative counters that it can
increment, decrement, or test for zero using the set of \emph{operations}
$\IS = \{ \inc_1, \inc_2, \dec_1, \dec_2, \testz_1, \testz_2 \}$.
A \emph{counter configuration} is a pair 
$C = (\valone, \valtwo) \in \mathbb{N} \times\mathbb{N}$.
The effect of an operation on a counter configuration is described
by a relation ${\to} \subseteq (\N \times \N) \times \IS \times (\N \times \N)$.
We have $(\valone,\valtwo) \ttrans{\op} (m',n')$ if one of the following holds:
\begin{itemize}
\item $\op = \inc_1$ and $\valone' = \valone+1$ and $\valtwo' = \valtwo$,
\item $\op = \dec_1$ and $\valone' = \valone-1 \ge 0$ and $\valtwo' = \valtwo$,
\item $\op = \testz_1$ and $\valone' = \valone = 0$ and $\valtwo' = \valtwo$,
\item $\op = \inc_2$ and $\valone' = \valone$ and $\valtwo' = \valtwo+1$,
\item $\op = \dec_2$ and $\valone' = \valone$ and $\valtwo' = \valtwo-1 \ge 0$,
\item $\op = \testz_2$ and $\valone' = \valone$ and $\valtwo' = \valtwo = 0$.
\end{itemize}

Now, a \emph{(two-counter) Minsky machine} is a tuple
$\mathcal{M} = (\Loc, \Trans,\src,\trg,\opt,t_{\text{init}},t_\text{final})$
where $\Loc$ is a finite set of
\emph{locations} and $\Trans$ is a finite set of transitions. The mappings
$\src: \Trans \to \Loc$,
$\trg: \Trans \to \Loc$, and
$\opt: \Trans \to \IS$ assign to each transition
a source location, a target location, and an operation, respectively.
Finally, $t_\text{init}$ is the \emph{initial transition} and $t_\text{final}$
the \emph{final transition}. For technical reasons (and without
loss of generality), we assume that
$t_\text{init} \neq t_\text{final}$,
$\{\opt(t_\text{init}),\opt(t_\text{final})\} \cap \{\testz_1, \testz_2\} = \emptyset$,
and $\trg(t_\text{final}) \neq \src(t)$ for all $t \in \Trans$.

For, $\op \in \IS$, let $\Trans_\op = \{t \in \Trans : \opt(t) = \op\}$.
Moreover, we set $\Trans_\inc = \Trans_{\inc_1} \cup \Trans_{\inc_2}$,
$\Trans_\dec = \Trans_{\dec_1} \cup \Trans_{\dec_2}$, and
$\Trans_\testz = \Trans_{\testz_1} \cup \Trans_{\testz_2}$.

For two counter configurations $C = (\valone,\valtwo)$ and $C' = (\valone',\valtwo')$
and a transition $t \in \Trans$,
we let $C \ttrans{t} C'$ if $C \ttrans{\opt(t)} C'$
(abusing notation).

A \emph{computation} $\pi$ of $\mathcal{M}$ is a finite sequence
\begin{align}
\pi:~ C_0 \ttrans{t_1} C_1 \ttrans{t_2} C_{2} \ldots \ttrans{t_k} C_{k}
\label{lab:computation}
\end{align}
($k \ge 1$) of counter configurations $C_i = (\valone_i,\valtwo_i)$
and transitions $t_i$ such that $C_0 =
(0, 0)$, $t_1 = t_\text{init}$, and, for all $i \in \{2,\ldots,k\}$,
$\trg(t_{i-1}) = \src(t_i)$. The computation is \emph{successful}
if $t_k = t_\text{final}$ (note that this actually implies $k \ge 2$).

It is well-known that the following problem is undecidable \cite{minsky_67}:
Does a given Minky machine $\mathcal{M}$ have a successful
computation?

\subsection{From Minsky Machines to \fltl}

We now correct (and simplify) the construction of an \fltl formula from a given
Minsky machine that we presented in \cite{BolligDL12}.

We fix a Minsky machine $\mathcal{M} = (\Loc, \Trans,\src,\trg,\opt,t_{\text{init}},t_\text{final})$.
We will construct an \fltl formula over
the alphabet
\[\Sigma = \{\asymb,\bsymb,\hatasymb,\hatbsymb\} \cup \Trans \cup \{\sep_0,\sep_1,\sep_\testz\} \cup \{\#\}\]
that is satisfiable iff $\mathcal{M}$ has a successful computation.
This will prove Theorem~\ref{thm:undecidability}.

\begin{figure}[t]
\centering
\scalebox{0.9}{
$(0,0)
\ttrans{(\loc_0,\inc_1,\loc_1)} (1,0)
\ttrans{(\loc_1,\inc_1,\loc_2)} (2,0)
\ttrans{(\loc_2,\testz_2,\loc_3)} (2,0)
\ttrans{(\loc_3,\inc_2,\loc_4)} (2,1)
\ttrans{(\loc_4,\dec_1,\loc_5)} (1,1)
\ttrans{(\loc_5,\inc_2,\loc_6)} (1,2)
$
}
\\[1ex]
$\downarrow$
\\[1ex]
$\begin{array}{r}
\\[-0.5ex]
\overbrace{~~~~~~~~}^{\substack{\textup{counter}\\\textup{update}\\[0.5ex]x_1}}
~~~\;
\overbrace{~~~~~~~~~~~~~}^{\substack{\textup{counter}\\\textup{update}\\[0.5ex]x_2}}
~~\;
\overbrace{~~~~~~~~~~~~~~~}^{\substack{\textup{counter update}\\[0.5ex]x_3}}
~~~~~\;
\overbrace{~~~~~~~~~~~~~~~~\;}^{\substack{\textup{counter update}\\[0.5ex]x_4}}
~~~\;
\overbrace{\;~~~~~~~~~~~~~~~~}^{\substack{\textup{counter update}\\[0.5ex]x_5}}
~~~\;\;
\overbrace{~~~~~~~~~~~~~~~~\,~~}^{\substack{\textup{counter update}\\[0.5ex]x_6}}
~~~~~~~~\,\,
\\[-9.0ex]
\extsep{\sep_0}{0}\!\dist\extrans{t_{\textup{init}}}{(\loc_0,\inc_1,\loc_1)}{1} \hatasymb
\extsep{\sep_1}{3} \asymb \extrans{t_2}{(\loc_1,\inc_1,\loc_2)}{5} \hatasymb\, \hatasymb
\extsep{\sep_0}{8} \asymb\,\asymb \extrans{t_3}{(\loc_2,\testz_2,\loc_3)}{11} \hatasymb\, \hatasymb
\extsep{\sep_\testz}{14} \asymb\,\asymb \extrans{t_4}{(\loc_3,\inc_2,\loc_4)}{17} \hatasymb\, \hatasymb\, \hatbsymb
\extsep{\sep_1}{21} \asymb\,\asymb\,\bsymb
\extrans{t_5}{(\loc_4,\dec_1,\loc_5)}{25} \hatasymb\, \hatbsymb
\extsep{\sep_0}{28} \asymb\,\bsymb
\extrans{t_{\textup{final}}}{(\loc_5,\inc_2,\loc_6)}{31}
\hatasymb\, \hatbsymb\, \hatbsymb
\extsep{\sep_1}{35}\,\dist \#^\omega\\[-12ex]
\underbrace{\;~~~~~~~~~~~\;}_{\substack{y_1\\[0.5ex]\textup{carryover}}}
~~~
\underbrace{~~~~~~~~~~~~~~~~}_{\substack{y_2\\[0.5ex]\textup{carryover}}}
~~
\underbrace{~~~~~~~~~~~~~~~~~~~}_{\substack{y_3\\[0.5ex]\textup{carryover}}}
~~
\underbrace{\;~~~~~~~~~~~~~~~~~~~~}_{\substack{y_4\\[0.5ex]\textup{carryover}}}
~\;
\underbrace{~~~~~~~~~~~~~~~~\;\,}_{\substack{y_5\\[0.5ex]\textup{carryover}}}
~~~~~~~~~~~~~~~~~~~~~~\;
\\[8ex]
\end{array}$
\caption{Encoding of a successful computation\label{fig:encoding}}
\end{figure}

\paragraph{Idea of Encoding.}
Before we formalize the reduction, we illustrate it by means of an example.
Figure~\ref{fig:encoding} depicts a successful computation and its
encoding as an infinite word over $\Sigma$.

A configuration $(\valone,\valtwo) \in \N \times \N$ is encoded in unary as
$\asymb^\valone \bsymb^\valtwo$ or $\hatasymb^\valone \hatbsymb^\valtwo$.
White symbols denote a configuration before a transition, and
gray symbols a configuration after a transition. For instance,
the finite word $x_4$ (positions 15--20) represents the step $(2,0) \ttrans{t_4} (2,1)$ where
$t_4 = (\loc_3,\inc_2,\loc_4)$. After a transition, gray symbols
are ``transformed back'' into white symbols (carryover) so that
another transition can take place. We need these two kinds of symbols
(white and gray) to check whether counter updates are performed
correctly. The separator symbols $\sep_0,\sep_1,\sep_\testz$ will
help us in this task and also to check whether every carryover
is correct.

The syntactic form of the encoding will be checked by
a simple LTL formula $\phisymb$ without frequencies (it is, in a sense,
symbolic). The counting part is done by a formula $\phicount$
so that $\mathcal{M}$ has a successful computation iff
$\phisymb \wedge \phicount$ is satisfiable.
We formalize both formulas below.

Let us first give the idea how we ensure a correct carryover. To make sure that the carryover
in the word $y_5 \in \Sigma^\ast$ covering positions 25--30 is valid,
the following formula has to hold in the suffix starting from position 25:
\[
\displaystyle
\Psi = \bigwedge_{\substack{\type_1 \in \{\asymb,\hatasymb\}\\\type_2 \in \{\bsymb,\hatbsymb\}}}
\left(
\begin{array}{ll}
& (\type_1 \vee \type_2 \vee \Trans) \myU{\frac{1}{2}} t_{\textup{final}}\\[0.5ex]
\wedge\!\! & (\type_1 \vee \type_2 \vee \sep_0 \vee \sep_1 \vee \sep_\testz) \myU{\frac{1}{2}} t_{\textup{final}}
\end{array}
\right)
\]
In particular, at least half of the letters in $y_5$ have to be from
$\{\asymb,\bsymb\} \cup \Trans$ and at least
half of them from $\{\hatasymb,\hatbsymb\} \cup \{\sep_0,\sep_1,\sep_\testz\}$ (which are disjoint sets
so that both sets cover exactly one half of the positions).
Note that, in $y_5$,
there are exactly one transition ($t_5$) and exactly one separator symbol ($\sep_0$)
so that these two letters cancel each other out.
Therefore, there are as many letters from $\{\asymb,\bsymb\}$
as letters from $\{\hatasymb,\hatbsymb\}$, i.e.,
$\occnumb{y_5}{\{\asymb,\bsymb\}} = \occnumb{y_5}{\{\hatasymb,\hatbsymb\}}
 = \frac{|y_5|-2}{2} = 2$.
The same reasoning actually gives us
$\occnumb{y_5}{\{\asymb,\hatbsymb\}} = \occnumb{y_5}{\{\hatasymb,\bsymb\}}
 = \frac{|y_5|-2}{2} = 2$. Thus, 
$\occnumb{y_5}{\{\asymb,\bsymb\}} = \occnumb{y_5}{\{\hatasymb,\bsymb\}}$
and
$\occnumb{y_5}{\{\hatasymb,\hatbsymb\}}=
\occnumb{y_5}{\{\hatasymb,\bsymb\}} = \frac{|y_5|-2}{2} = 2$.
This implies that $\asymb$ and $\hatasymb$ occur equally often in $y_5$ and so do
$\bsymb$ and $\hatbsymb$. Now, formula $\Psi$ has to hold at every transition.
Using a backward argument allows us to show that the
equality required for the correct carryover
holds for $y_4y_5$, $y_3y_4y_5$, etc.\ and, thus, in $y_4$, $y_3$, and so on.

Now, let us explain how we ensure correctness of counter updates.
In word $x_6$, counter 2 is incremented.
That is, on the right of the transition symbol $t_{\textup{final}}$,
there should be an additional $\hatbsymb$-symbol.
To guarantee this, we consider that an $\inc_2$-transition and
$\hatbsymb$ are complementary.
In other words, an $\inc_2$-transition counts like a $\bsymb$-symbol.
Moreover, $\dec_2$ is like $\hatbsymb$, and similarly for counter~1.
Following the trick described above,
we use an until formula ensuring that, in $x_6$ (i.e., from position 29 \emph{until}
$\sep_1$ holds), counter updates are simulated correctly.
More precisely, our formula will make sure that
the symbols from $\{\asymb,\bsymb\} \cup \Trans_{\inc_2}$,
the symbols from $\{\asymb,\hatbsymb\}$,
the symbols from $\{\hatasymb,\bsymb\} \cup \Trans_{\inc_2}$, and
the symbols from $\{\hatasymb,\hatbsymb\}$
all cover at least half of $x_6$ (and, therefore, exactly half of it).
In particular, as $x_6$ includes one $\inc_2$-transition,
this implies that there is one more occurrence of
$\hatbsymb$ than of $\bsymb$.

It is important to note here why we actually have three separator symbols.
Letters from $\Trans_\testz$ and
$\sep_\testz$ are complementary, which guarantees that the counter
values remain unchanged by a zero transition.
Moreover, $\sep_0$ and $\sep_1$ are complementary and
alternate in the encoding (cf.\ Figure~\ref{fig:encoding}).
We employ formulas (roughly) of the form
$\sep_0 \implies \X(\ldots\, \myU{\frac{1}{2}} \mathit{last}_1)$ and
$\sep_1 \implies \X(\ldots\, \myU{\frac{1}{2}} \mathit{last}_0)$
where formula $\mathit{last}_1 = \sep_1 \mathrel{\wedge} \X\G \neg\sep_1$
refers to the very last $\sep_1$-symbol and accordingly for $\mathit{last}_0$.
This guarantees an equal number of occurrences of
the complementary symbols $\sep_0$ and $\sep_1$
in the infix that we want to check
so that $\sep_0$ and $\sep_1$ do not interfere with the counter values.
Moreover, $\sep_\testz$ is associated with a zero test and vice versa
(positions 11 and 14). Thus, there will be as many occurrences of $\sep_\testz$
as zero transitions.
Now, consider the $\sep_\testz$-labeled position 14 in the example.
The next occurrence of a symbol from $\{\sep_0,\sep_1\}$ is $\sep_1$.
By means of an until-formula that refers to the last $\sep_1$, as sketched above, we will guarantee 
a balanced number of separator symbols (and zero transitions)
in the infix whose counter values are supposed to be checked.
In fact, starting from $x_6$, a backward induction shows 
that all counter updates $x_6,x_5,\ldots,x_1$ are correctly simulated.

\medskip

Observe that we use two different notions of complementarity (actually,
two different partitions of $\Sigma \setminus \{\#\}$), one for verifying
carryovers and one for verifying counter updates.

\medskip

Let us formalize $\phisymb$ and $\phicount$.

\paragraph{The Formula $\phisymb$.}

The models of $\phisymb$ will be given by the set
$\WF \subseteq \Sigma^\omega$ of words of the form
(the $C_i$ indicate that certain subwords represent counter configurations)
\begin{align}
w = \sigma_0\;
\underbrace{\asymb^{\valone_0}\bsymb^{\valtwo_0}}_{C_0} \;t_1\;
\underbrace{\hatasymb^{\valone_1}\hatbsymb^{\valtwo_1}}_{C_1}\;
\sepvar_{1}\;
\underbrace{\asymb^{\valone_1'}\bsymb^{\valtwo_1'}}_{C_1'} \;t_2\;
\underbrace{{\hatasymb}^{\valone_2}{\hatbsymb}^{\valtwo_2}}_{C_2}\;
\sepvar_{2}\;
\ldots\;
\underbrace{\asymb^{\valone_{k-1}'}\bsymb^{\valtwo_{k-1}'}}_{C_{k-1}'} \;t_k\;
\underbrace{{\hatasymb}^{\valone_k}{\hatbsymb}^{\valtwo_k}}_{C_k}\;
\sepvar_{k}\;
\lastsymb^\omega
\label{lab:encoding}
\end{align}
(with $k \ge 1$) where
\begin{itemize}
\item[(P1)] $\valone_0 = \valtwo_0 = 0$, $t_1 = t_{\text{init}}$, and $t_k = t_{\text{final}}$,
\item[(P2)] for all $i \in \{2,\ldots,k\}$, $\trg(t_{i-1}) = \src(t_{i})$,
\item[(P3)] for all $i \in \{1,\ldots,k\}$, $\opt(t_i) = \testz_1$ implies $\valone_{i} = 0$ and
\item[(P4)] for all $i \in \{1,\ldots,k\}$, $\opt(t_i) = \testz_2$ implies $\valtwo_{i} = 0$,
\item[(P5)] $\sepvar_0,\ldots,\sepvar_k$ are as follows:
\begin{enumerate}
\item[(a)] $\sepvar_0 = \sep_{0}$ and, for all $i \in \{1,\ldots,k\}$ such that $\opt(t_i) \in \Trans_{\testz}$, we have \[\sepvar_i = \sep_\testz\,,\]
\item[(b)] let $1 \le i_1 <\ldots < i_p \le k$ be all the indices $i_j$ such that $\opt(t_{i_j}) \not\in \Trans_{\testz}$; then, for all $j \in \{1,\ldots,p\}$, \[\sepvar_{i_j} = \sep_{(j \mathop{\textup{mod}} 2)}\,.\]
\end{enumerate}
\end{itemize}
Note that, since $t_{\textup{init}} \neq t_{\textup{final}}$, we actually have $k \ge 2$.
With the computation $\pi$ from (\ref{lab:computation}), we associate
the word $w \in \Sigma^\omega$ from (\ref{lab:encoding})
where $\sepvar_0,\ldots,\sepvar_k$ are as required in (P5), and
$m_i' = m_i$ and $n_i' = n_i$ for all $i \in \{1,\ldots,k-1\}$.
We denote this encoding by $\enc{\pi} \in \Sigma^\omega$.
Note that the encoding also satisfies (P1)--(P4).

In the following, we use $\sep$ as a shorthand
for the formula $\sep_0 \vee \sep_1 \vee \sep_\testz$.
We first observe that there is indeed a standard (without frequencies) LTL formula
$\phisymb$ whose models are exactly the words from $\WF$.
The next formula ensures the form required in (\ref{lab:encoding}):
\[
\begin{array}{rl}
& \sep \wedge \X \bigl((\neg\#) \mathrel{\U} (\sep \wedge \X \G \#)\bigr)\\[1ex]
\wedge\!\!\! &
\G
\left(
\begin{array}{rl}
&  \sep \wedge \neg \X \#\\[1ex]
\implies\!\!\! &
\X (\asymb \mathrel{\U} (\bsymb \mathrel{\U} (\Trans \wedge \X (\hatasymb \mathrel{\U} (\hatbsymb \mathrel{\U} \sep)))))
\end{array}
\right)
\end{array}
\]
Moreover, the following formulas take care of conditions (P1)--(P5):
\begin{itemize}
\item[(P1)] $\X t_{\textup{init}} \wedge \F \bigl(t_{\textup{final}} \wedge \X(\G\neg \Trans)\bigr)$

\item[(P2)] $\G \bigwedge_{t \in \Trans}
\bigl(t ~\implies~ \neg \X \bigvee_{\substack{t' \in \Trans\\\trg(t) \neq \src(t')}}
(\neg\Trans)\U t'
 \bigr)$

\item[(P3)] $\displaystyle \G \bigl(\Trans_{\testz_1}
~\implies~ (\neg\,\hatasymb) \U \sep\bigr)$

\item[(P4)] $\displaystyle \G \bigl(\Trans_{\testz_2}
~\implies~ (\neg\,\hatbsymb) \U \sep\bigr)$

\item[(P5a)] $\sep_0 \wedge \G \bigl(\Trans_\testz ~\implies~ (\neg \sep) \U \sep_\testz \bigr)$

\item[(P5b)]
~\!\!\!\!\!$\begin{array}[t]{rl}
& \G \bigl((\Trans_\inc \cup \Trans_\dec) ~\implies~ (\neg \sep) \mathrel{\U} (\sep_0 \vee \sep_1) \bigr)\\[1ex]
\wedge\!\!\!& \G \bigl(\bigwedge_{\beta \in \{0,1\}} \bigl[(\sep_\beta \wedge \neg \X \#) ~\implies~ (\neg (\sep_0 \vee \sep_1)) \mathrel{\U} \sep_{1-\beta}\bigr]\bigr)
\end{array}$
\end{itemize}
We hence obtain $\phisymb$ as conjunction of those formulas.

\paragraph{The Formula $\phicount$.}

Towards $\phicount$,
the following notation will turn out to be useful:
\begin{alignat*}{3}
\Atuples &= \{\asymb,\hatasymb\} &&\times \{\bsymb,\hatbsymb\} && \times \{0,1\} \times \{\testz,\overline{\testz}\}\\
\Btuples &= \{\asymb,\hatasymb\} &&\times \{\bsymb,\hatbsymb\} &&\times \{0,1\}\\
\Atypes &= \{\asymb,\hatasymb\} &&\cup \{\bsymb,\hatbsymb\} && \cup \{0,1\} \cup \{\testz,\overline{\testz}\}\\
\Btypes &= \{\asymb,\hatasymb\} &&\cup \{\bsymb,\hatbsymb\} &&\cup \{0,1\}
\end{alignat*}
We define the following two partitions of $\Sigma \setminus \{\#\}$.
Partition $(A_{\type})_{_{\type \in \Atypes}}$ is used for checking counter updates,
and partition $(B_{\type})_{_{\type \in \Btypes}}$ for checking carryovers:
\begin{align*}
A_\asymb &= \{\asymb\} \cup \Trans_{\inc_1}
& 
B_\asymb &= \{\asymb\}
\\
A_{\hatasymb} &= \{\hatasymb\} \cup \Trans_{\dec_1}
&
B_{\hatasymb} &= \{\hatasymb\}
\\
A_\bsymb &= \{\bsymb\} \cup \Trans_{\inc_2}
&
B_\bsymb &= \{\bsymb\}
\\
A_{\hatbsymb} &= \{\hatbsymb\} \cup \Trans_{\dec_2}
&
B_{\hatbsymb} &= \{\hatbsymb\}
\\
A_0 &= \{\sep_0\} \qquad \Azero = \Trans_{\testz}
&
B_0 &= \Trans
\\
A_1 &= \{\sep_1\} \qquad \Aczero = \{\sep_\testz\}
&
B_1 &= \{\sep_0,\sep_1,\sep_\testz\}
\end{align*}

Recall that we consider some letters to be complementary, such as those from $A_\asymb$ and those from $A_\hatasymb$.
More precisely, for $\type \in \{\asymb,\hatasymb\} \cup \{\bsymb,\hatbsymb\} \cup \{0,1\} \cup \{\testz,\overline{\testz}\}$,
we let $\overline{\type}$ denote the ``complement'' of $\type$, i.e.,
$\overline{\asymb} = \hatasymb$, $\overline{\hatasymb} = \asymb$,
$\overline{\bsymb} = \hatbsymb$, $\overline{\hatbsymb} = \bsymb$,
$\overline{0} = 1$, $\overline{1} = 0$, and
$\overline{\overline{\testz}} = \testz$.

From the following easy lemma (whose proof can be found in Section~\ref{sec:model-comp}), we will derive frequency until formulas that guarantee that
certain complementary letters will occur equally often in a word.

\begin{lemma}\label{lem:equality}
Let $w \in (\Sigma \setminus \{\#\})^\ast$.
\begin{itemize}
\item[(a)] Suppose that, for all $(\type_1,\type_2,\type_3,\type_4) \in \Atuples$, we have
\[\occnumb{w}{A_{\type_1}} + \occnumb{w}{{A_{\type_2}}} + \occnumb{w}{{A_{\type_3}}} + \occnumb{w}{{A_{\type_4}}} \ge \frac{|w|}{2}\,.\]
Then,
$\occnumb{w}{A_{\type}} = \occnumb{w}{{A_{\overline{\type}}}}$ for all $\type \in \Atypes$.
\item[(b)] Suppose that, for all $(\type_1,\type_2,\type_3) \in \Btuples$, we have
\[\occnumb{w}{B_{\type_1}} + \occnumb{w}{{B_{\type_2}}} + \occnumb{w}{{B_{\type_3}}} \ge \frac{|w|}{2}\,.\]
Then, $\occnumb{w}{B_{\type}} = \occnumb{w}{{B_{\overline{\type}}}}$ for all $\type \in \Btypes$.
\end{itemize}
\end{lemma}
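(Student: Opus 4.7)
The plan is to exploit that both families $(A_\type)_{\type \in \Atypes}$ and $(B_\type)_{\type \in \Btypes}$ are in fact partitions of $\Sigma \setminus \{\#\}$. First I would verify this directly from the definitions: the eight $A$-sets together cover $\{\asymb,\hatasymb,\bsymb,\hatbsymb\} \cup \Trans \cup \{\sep_0,\sep_1,\sep_\testz\} = \Sigma \setminus \{\#\}$ disjointly, and likewise the six $B$-sets. Consequently, for any $w \in (\Sigma \setminus \{\#\})^\ast$, the total $\sum_{\type \in \Atypes} \occnumb{w}{A_\type}$ equals $|w|$, and analogously for $B$.

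For part~(a), the crucial observation is that every tuple $(\type_1,\type_2,\type_3,\type_4) \in \Atuples$ has a unique ``complementary'' tuple $(\overline{\type_1},\overline{\type_2},\overline{\type_3},\overline{\type_4})$, again in $\Atuples$, and the two tuples together mention each of the eight types in $\Atypes$ exactly once. Adding the two inequalities supplied by the hypothesis for such a pair therefore yields exactly $|w|$ on the left, while each summand is at least $|w|/2$; hence both inequalities are in fact equalities, and
\[
\occnumb{w}{A_{\type_1}} + \occnumb{w}{A_{\type_2}} + \occnumb{w}{A_{\type_3}} + \occnumb{w}{A_{\type_4}} \;=\; \tfrac{|w|}{2}
\]
holds for every tuple in $\Atuples$.

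To derive $\occnumb{w}{A_\type} = \occnumb{w}{A_{\overline{\type}}}$ for a specific $\type \in \Atypes$, I would pick any two tuples of $\Atuples$ that agree on the three coordinates outside the pair $\{\type,\overline{\type}\}$ and differ precisely at that coordinate; subtracting the two equalities just obtained isolates the difference $\occnumb{w}{A_{\type}} - \occnumb{w}{A_{\overline{\type}}} = 0$. Part~(b) follows by exactly the same reasoning, with three coordinates in place of four and $\Btypes$, $\Btuples$ in place of $\Atypes$, $\Atuples$. There is no genuine obstacle: the whole argument hinges on the bookkeeping fact that the two families partition $\Sigma \setminus \{\#\}$, which turns the hypothesised ``$\ge |w|/2$''-inequalities into equalities and reduces the lemma to a trivial subtraction.
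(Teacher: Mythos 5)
Your proposal is correct and follows essentially the same route as the paper: use the fact that $(A_\type)_{\type\in\Atypes}$ partitions $\Sigma\setminus\{\#\}$ to pair each tuple with its complementary tuple, upgrade the $\ge\frac{|w|}{2}$ hypotheses to equalities, and then compare tuples differing in a single coordinate (the paper does this via a chain of equalities, which is the same subtraction in disguise). No gaps.
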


For $\beta \in \{0,1\}$, we now define auxiliary formulas
\begin{align*}
\mathit{last}_{\beta} &= \sep_\beta \wedge \X\G \neg \sep_\beta\\
\mathit{next}_{\beta} &= \X\, \bigl( (\neg (\sep_0 \vee \sep_1)) \U \sep_\beta \bigr)\,.
\end{align*}
The first formula holds exactly at the last $\sep_\beta$-position.
The second formula holds exactly at those positions such that the very
next symbol from $\{\sep_0,\sep_1\}$ 
(exists and) 
is $\sep_\beta$.

The next formulas are justified by Lemma~\ref{lem:equality}.
For $\beta \in \{0,1\}$, the formula
\begin{align*}
\Phi_\beta &= \X
\displaystyle
\bigwedge_{
\substack{
(\type_1,\type_2,\type_3,\type_4) \\
\in \Atuples
}}
\bigl((A_{\type_1} \vee A_{\type_2} \vee A_{\type_3} \vee A_{\type_4}) \myU{\frac{1}{2}} \mathit{last}_{\beta}\bigr)\\
\intertext{will take care of counter updates. To take into account carryover, we define}
\Psi &=
\displaystyle
\bigwedge_{
\substack{
(\type_1,\type_2,\type_3) \\
\in \Btuples
}}
\bigl((B_{\type_1} \vee B_{\type_2} \vee B_{\type_3}) \myU{\frac{1}{2}} t_{\textup{final}}\bigr)\,.
\end{align*}

With this, we finally let
\[
\phicount = \G
\left(
\begin{array}{rlcl}
& (\sep \wedge \mathit{next}_0 &\implies& \Phi_0)\\[0.5ex]
\wedge\!\!& (\sep \wedge \mathit{next}_1 &\implies& \Phi_1)\\[0.5ex]
\wedge \!\!& (\Trans \wedge \neg\lasttrans &\implies& \Psi)
\end{array}
\right)\,.
\]

\begin{remark}\label{rem:main}
Note that the until formulas used in $\Phi_0$, $\Phi_1$, and $\Psi$
all refer to unique positions
in terms of formulas $\mathit{last}_{\beta}$ and $t_{\textup{final}}$, respectively.
If satisfied, one can show, using a backward argument, that the very next
$\sep_\beta$-position and, respectively, $\Trans$-position are also possible reference points.
This latter property was not ensured in \cite{BolligDL12}, resulting in the flaw.
\end{remark}

We will show the following lemma, which establishes undecidability of the satisfiability
problem for \fltl:
\begin{lemma}\label{lem:correctness}
The following statements are equivalent:
\begin{itemize}
\item The Minsky machine $\mathcal{M}$ has a successful computation.
\item The \fltl formula $\phisymb \wedge \phicount$ is satisfiable.
\end{itemize}
\end{lemma}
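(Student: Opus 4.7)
For the forward direction, I take a successful computation $\pi$ of $\mathcal{M}$, set $w = \enc{\pi}$, and verify $w \models \phisymb \wedge \phicount$. The LTL part $\phisymb$ holds by inspection: each of its conjuncts was tailored to enforce the corresponding item of (P1)--(P5), which the encoding satisfies by construction. For $\phicount$, I verify each guarded conjunct by a direct letter count in $\enc{\pi}$. At every non-final transition position, the infix up to the $\lasttrans$-position contains, for each tuple in $\Btuples$, exactly half its letters in the corresponding union, thanks to the pairing of white and gray symbols and of transitions with separators, so $\Psi$'s frequency-$\tfrac{1}{2}$ bound is tight. An analogous count at each $\sep$-position with $\mathit{next}_\beta$, using the pairing of test transitions with $\sep_\testz$ from (P5)(a) and the $\sep_0/\sep_1$-alternation from (P5)(b), verifies $\Phi_\beta$.

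For the reverse direction, let $w \models \phisymb \wedge \phicount$. Since $\phisymb$ alone forces $w \in \WF$, the word $w$ has the form (\ref{lab:encoding}) with properties (P1)--(P5), and in particular the transition sequence is already a syntactically valid path of $\mathcal{M}$ from $t_{\textup{init}}$ to $\lasttrans$ with zero tests correctly encoded. It remains to show that the encoded counter values form a valid computation: that carryover $\valone_i' = \valone_i$, $\valtwo_i' = \valtwo_i$ holds and that each counter update is consistent with $\opt(t_i)$.

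For carryover I use the $\Psi$-conjunct. At every non-final transition position $p_i$, Lemma~\ref{lem:equality}(b) applied to the suffix from $p_i$ up to just before $\lasttrans$ yields $\occnumb{\cdot}{B_\type} = \occnumb{\cdot}{B_{\overline\type}}$; these equalities are in fact exact, because each pair of opposite unions in $\Btuples$ partitions $\Sigma \setminus \{\#\}$, forcing the two $\ge \tfrac{1}{2}$ bounds on complementary unions to coincide at exactly $\tfrac{1}{2}$. Subtracting the exact equalities at $p_{i+1}$ from those at $p_i$ transfers them to the infix $[p_i, p_{i+1}-1]$, which contains one transition, one separator, and the configurations $\hat{C}_i$ and $C_i'$; then $B_\asymb = B_{\hatasymb}$ and $B_\bsymb = B_{\hatbsymb}$ on this infix yield $\valone_i = \valone_i'$ and $\valtwo_i = \valtwo_i'$.

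For counter updates I use the $\Phi_\beta$-conjuncts together with Lemma~\ref{lem:equality}(a), which by the same tightness argument gives exact A-equalities on the suffix $[p+1, \mathit{last}_\beta - 1]$ at each $\sep$-position $p$ with $\mathit{next}_\beta$. My goal is to establish $A_\asymb = A_{\hatasymb}$ and $A_\bsymb = A_{\hatbsymb}$ on each infix $[p+1, p'-1]$ between consecutive $\sep$-positions $p < p'$; such an infix contains exactly one transition $t_i$, and these two equalities directly encode the correct update of counters 1 and 2 by $\opt(t_i)$. The hard part, and the point of Remark~\ref{rem:main}, is that $\Phi_\beta$ refers to $\mathit{last}_\beta$ rather than to $p'$, so when $p' < \mathit{last}_\beta$ one cannot isolate $[p+1, p'-1]$ by a single subtraction. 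I resolve this by a backward induction on $\sep$-positions: the base case is when $p' = \mathit{last}_\beta$, where $\Phi_\beta$ at $p$ applies directly; in the inductive step, the same two counter equalities hold on every later inter-$\sep$ infix by the induction hypothesis, and, crucially, the intermediate $\sep$-positions contribute zero to both $A_\asymb - A_{\hatasymb}$ and $A_\bsymb - A_{\hatbsymb}$, so these two equalities persist on the union $[p', \mathit{last}_\beta - 1]$ and can be subtracted from the $\Phi_\beta$-equalities to yield them on $[p+1, p'-1]$.
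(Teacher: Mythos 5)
Your proof is correct and takes essentially the same route as the paper's: the forward direction by direct letter counts in $\enc{\pi}$, and the reverse direction by first extracting the form (\ref{lab:encoding}) from $\phisymb$, then deriving the carryover equalities from $\Psi$ via subtraction at consecutive transition positions (Claim~\ref{claim:equal-y}) and the counter-update equalities from $\Phi_\beta$ via a backward induction over separator positions with the unique reference points $\mathit{last}_\beta$ and $t_{\textup{final}}$ (Claims~\ref{claim:equal-x} and~\ref{claim:counter}). The differences are purely notational (position intervals versus the named infixes $x_i$, $y_i$, $z_i$).
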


\section{Correctness of the Reduction (Proof of Lemma~\ref{lem:correctness})}
\label{sec:correctness}

In this section, we prove Lemma~\ref{lem:correctness}.

\subsection{Every Computation is a Model}

The first direction is easy:

\begin{lemma}\label{lem:comp-model}
Let $\pi$ be a successful computation of $\mathcal{M}$.
Then, $\enc{\pi} \models \phisymb \wedge \phicount$.
\end{lemma}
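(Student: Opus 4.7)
The plan is to verify $\enc{\pi} \models \phisymb$ and $\enc{\pi} \models \phicount$ separately, in both cases exploiting the explicit shape of $\enc{\pi}$ from~(\ref{lab:encoding}). For $\phisymb$ the argument is pure bookkeeping: $\enc{\pi}$ was built to satisfy conditions (P1)--(P5), and each conjunct of $\phisymb$ was constructed to capture exactly one of these conditions as a standard LTL formula, so one verifies them conjunct by conjunct without ever touching frequencies.

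For $\phicount$ the three subformulas $\Psi$, $\Phi_0$, $\Phi_1$ must be checked at every position where their premises trigger (non-final transitions for $\Psi$, separators satisfying $\mathit{next}_\beta$ for $\Phi_\beta$). The crucial observation is that in $\enc{\pi}$ the atoms $t_\text{final}$ and $\mathit{last}_\beta$ each hold at a \emph{unique} position, so the existential witness $j$ in the semantics of $\myU{1/2}$ is forced, and each check reduces to counting letters in a determined finite infix $w'$ and showing that the chosen disjoint union of classes covers at least $|w'|/2$ positions.

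For $\Psi$ at a non-final transition $t_i$, the infix $w'$ runs from $t_i$ (inclusive) to $t_\text{final}$ (exclusive) and, thanks to the carryover equalities $m_j' = m_j$, $n_j' = n_j$ built into $\enc{\pi}$, contains equal numbers of $\asymb$ and $\hatasymb$, equal numbers of $\bsymb$ and $\hatbsymb$, and equal numbers of transitions ($k-i$) and separators ($k-i$). The six $B$-classes therefore split into three complementary pairs of equal size, and for every $(\type_1,\type_2,\type_3) \in \Btuples$ the disjoint union $B_{\type_1} \cup B_{\type_2} \cup B_{\type_3}$ covers exactly $|w'|/2$ positions. For $\Phi_\beta$ at a separator $\sigma_i$ with $\mathit{next}_\beta$, the infix $w'$ runs from the position after $\sigma_i$ to just before $\mathit{last}_\beta$. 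Decompose $w'$ into counter-update blocks $\asymb^{m_j'}\bsymb^{n_j'}\,t_{j+1}\,\hatasymb^{m_{j+1}}\hatbsymb^{n_{j+1}}$ glued by single separator letters; correctness of the counter update at $t_{j+1}$ balances $A_\asymb$ with $A_\hatasymb$ and $A_\bsymb$ with $A_\hatbsymb$ inside each block (e.g., after an $\inc_1$-step the extra $\hatasymb$ is compensated by $t_{j+1} \in \Trans_{\inc_1} \subseteq A_\asymb$, and the other five operations are analogous). Among the glue separators, (P5b) together with $\mathit{next}_\beta$ forces the non-$\testz$ ones strictly between $\sigma_i$ and $\mathit{last}_\beta$ to alternate $\sep_\beta, \sep_{1-\beta}, \ldots, \sep_{1-\beta}$, balancing $A_0$ with $A_1$; and (P5a) combined with $\mathit{last}_\beta \neq \sep_\testz$ puts the $\sep_\testz$-letters in $w'$ into a bijection with the zero-test transitions in $w'$, balancing $\Aczero$ with $\Azero$. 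Summing the four balanced pairs, $A_{\type_1} \cup \cdots \cup A_{\type_4}$ covers exactly $|w'|/2$ positions for every $(\type_1,\ldots,\type_4) \in \Atuples$.

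The main obstacle is the separator bookkeeping for $\Phi_\beta$: one must argue that the number of non-$\testz$ separators strictly between $\sigma_i$ and $\mathit{last}_\beta$ is even (so that $\sep_\beta$ and $\sep_{1-\beta}$ balance) and that the transition $t_l$ at position $\mathit{last}_\beta$ is not a zero test (so that $\sep_\testz$-letters in $w'$ match zero transitions there). Both facts unfold routinely from (P5a), (P5b) and the definitions of $\mathit{next}_\beta$ and $\mathit{last}_\beta$, but they are exactly the subtlety that Remark~\ref{rem:main} attributes to the original flaw. Note that Lemma~\ref{lem:equality} is not needed for this direction; it will only be invoked for the converse implication.
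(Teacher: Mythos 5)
Your proposal is correct and follows essentially the same route as the paper's proof: $\phisymb$ by direct bookkeeping against (P1)--(P5), and $\phicount$ by computing that each disjoint union of $B$-classes (resp.\ $A$-classes) covers exactly half of the relevant infix, using the same block decomposition $u_{j}\,t_{j+1}\,v_{j+1}$, the alternation of $\sep_0/\sep_1$, and the coupling of zero tests with $\sep_\testz$. Your explicit handling of the separator parity and of the fact that the transition at $\mathit{last}_\beta$ is not a zero test is a correct (and slightly more detailed) rendering of what the paper compresses into ``due to (P5)''.
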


\begin{proof}
Suppose $\pi$ is like in equation~(\ref{lab:computation}) and
let $w = \enc{\pi}$ as given by equation (\ref{lab:encoding}).

First, note that $w \models \phisymb$.
Indeed, (P5) holds by definition of $\enc{\pi}$, and
(P1)--(P4) hold by the definition of a successful computation.

\medskip

Next, we check $w \models \phicount$.
Let $i \in \{1,\ldots,k-1\}$ and
\begin{align*}
w_i =
\underbrace{
t_i\;
\underbrace{\hatasymb^{\valone_i}\hatbsymb^{\valtwo_i}}_{C_i}\;
\sepvar_{i}\;
\underbrace{\asymb^{\valone_i}\bsymb^{\valtwo_i}}_{C_i}
\;
\ldots\;
\;t_{k-1}\;
\underbrace{{\hatasymb}^{\valone_{k-1}}{\hatbsymb}^{\valtwo_{k-1}}}_{C_{k-1}}\;
\sepvar_{k-1}
\underbrace{\asymb^{\valone_{k-1}}\bsymb^{\valtwo_{k-1}}}_{C_{k-1}}
}_{=: z_i}
\;t_\textup{final}\;
\underbrace{{\hatasymb}^{\valone_k}{\hatbsymb}^{\valtwo_k}}_{C_k}\;
\sepvar_{k}\;
\lastsymb^\omega
\end{align*}
be the suffix of $w$ starting at $t_i$.
We show that $w_i \models \Psi$.
For $(\type_1,\type_2,\type_3) \in \Btuples$, we have
\[
\occnumb{z_i}{B_{\type_1}} = \sum_{j = i}^{k-1} \valone_j \text{~~~~~and~~~~~}
\occnumb{z_i}{B_{\type_2}} = \sum_{j = i}^{k-1} \valtwo_j \text{~~~~~and~~~~~}
\occnumb{z_i}{B_{\type_3}} = {k-i}.
\]
Thus,
\[\occnumb{z_i}{B_{\type_1}} + \occnumb{z_i}{B_{\type_2}} + \occnumb{z_i}{B_{\type_3}} = \frac{|z_i|}{2}\,.\]
This implies $w_i \models \Psi$.

Now, let $\beta \in \{0,1\}$, $i,j \in \{1,\ldots,k\}$, and
\[\begin{array}{l}
w_i = \sigma_{i-1}\;
\underbrace{
\underbrace{
\underbrace{u_{i-1}}_{C_{i-1}} \;t_{i}\;
\underbrace{v_{i}}_{C_{i}}\;
\ldots\;
}_{=: z_i}
\sepvar_{j}\;
\underbrace{u_{j}}_{C_{j}} \;t_{j+1}\;
\underbrace{v_{j+1}}_{C_{j+1}}\;
\ldots\;
\sepvar_{k-1}\;
\underbrace{u_{k-1}}_{C_{k-1}} \;t_k\;
\underbrace{v_k}_{C_k}\;
}_{=: z_i'}
\sepvar_{k}\;
\lastsymb^\omega\,.
\end{array}\]
such that 
$\sepvar_{j} \in A_{1-\beta}$
and $\sepvar_k \in A_\beta$ are
the last occurrences of
$\sep_{1-\beta}$ and $\sep_\beta$, respectively.
Note that $\occnumb{z_i}{A_\testz} = \occnumb{z_i}{\Aczero}$
and $\occnumb{z_i'}{A_\testz} = \occnumb{z_i'}{\Aczero}$
since zero-test transitions are coupled with $\sep_\testz$ and vice versa (P5).
Again due to (P5), labels from $\{\sep_0,\sep_1\}$ alternate. Thus,
\begin{alignat*}{3}
&w_i \models \mathit{next}_{1-\beta} &~\myimplies~& \occnumb{z_i}{A_0} = \occnumb{z_i}{A_1}\\
&w_i \models \mathit{next}_\beta &~\myimplies~& \occnumb{z_i'}{A_0} = \occnumb{z_i'}{A_1}\,.
&\end{alignat*}
Finally, for all $p \in \{i,\ldots,k\}$ and $x_p = u_{p-1} t_{p} v_{p}$,
we have
$\occnumb{x_p}{A_\asymb} = \occnumb{x_p}{A_\hatasymb}$ and
$\occnumb{x_p}{A_\bsymb} = \occnumb{x_p}{A_\hatbsymb}$
due to the definition of the transition relation.
Thus, we also obtain
$\occnumb{z_i}{A_\asymb} = \occnumb{z_i}{A_\hatasymb}$,
$\occnumb{z_i}{A_\bsymb} = \occnumb{z_i}{A_\hatbsymb}$,
$\occnumb{z_i'}{A_\asymb} = \occnumb{z_i'}{A_\hatasymb}$, and
$\occnumb{z_i'}{A_\bsymb} = \occnumb{z_i'}{A_\hatbsymb}$.

Thus, if we have $w_i \models \mathit{next}_{1-\beta}$, then
$\sum_{p \in \{1,2,3,4\}} \occnumb{z_i}{A_{\type_p}} = \frac{|z_i|}{2}$
for all $(\type_1,\type_2,\type_3,\type_4) \in \Atypes$.
Moreover, if $w_i \models \mathit{next}_{\beta}$, then
$\sum_{p \in \{1,2,3,4\}} \occnumb{z_i'}{A_{\type_p}} = \frac{|z_i'|}{2}$
for all $(\type_1,\type_2,\type_3,\type_4) \in \Atypes$.
This shows that $w \models \phicount$.
\end{proof}

\subsection{Every Model is a Computation}\label{sec:model-comp}

Before we prove the other, more interesting direction, we provide
the proof of Lemma~\ref{lem:equality}:

\begin{proof}[Proof of Lemma~\ref{lem:equality}]
We only show (a). Part (b) is along exactly the same lines.
Suppose $\occnumb{w}{A_{\type_1}} + \occnumb{w}{{A_{\type_2}}} + \occnumb{w}{{A_{\type_3}}} + \occnumb{w}{{A_{\type_4}}} \ge \frac{|w|}{2}$
for all $(\type_1,\type_2,\type_3,\type_4) \in \Atuples$.
As $(A_\type)_{\type \in \Atypes}$ is a partition of $\Sigma \setminus \{\#\}$, we also have
$\sum_{p \in \{1,2,3,4\}} \occnumb{w}{A_{\type_p}} + \occnumb{w}{A_{\overline{\type_p}}} = |w|$
for all $(\type_1,\type_2,\type_3,\type_4) \in \Atypes$.
Altogether, we get
$\occnumb{w}{A_{\type_1}} + \occnumb{w}{{A_{\type_2}}} + \occnumb{w}{{A_{\type_3}}} + \occnumb{w}{{A_{\type_4}}} = \frac{|w|}{2}$
for all tuples $(\type_1,\type_2,\type_3,\type_4) \in \Atuples$.
In particular,
\begin{align*}
  &\occnumb{w}{A_{\asymb}} + \occnumb{w}{{A_{\bsymb}}} + \occnumb{w}{{A_{0}}} + \occnumb{w}{{A_{\testz}}}\\
=~ &\occnumb{w}{A_{\hatasymb}} + \occnumb{w}{{A_{\bsymb}}} + \occnumb{w}{{A_{0}}} + \occnumb{w}{{A_{\testz}}}\\
=~ &\occnumb{w}{A_{\hatasymb}} + \occnumb{w}{{A_{\hatbsymb}}} + \occnumb{w}{{A_{0}}} + \occnumb{w}{{A_{\testz}}}\\
=~ &\occnumb{w}{A_{\hatasymb}} + \occnumb{w}{{A_{\hatbsymb}}} + \occnumb{w}{{A_{1}}} + \occnumb{w}{{A_{\testz}}}\\
=~ &\occnumb{w}{A_{\hatasymb}} + \occnumb{w}{{A_{\hatbsymb}}} + \occnumb{w}{{A_{1}}} + \occnumb{w}{{\Aczero}}
\end{align*}
Thus, $\occnumb{w}{A_{\type}} = \occnumb{w}{{A_{\overline{\type}}}}$ for all $\type \in \Atypes$.
\end{proof}

Now, we are ready to prove:

\begin{lemma}\label{lem:model-comp}
Let $w \in \Sigma^\omega$ be a model of $\phisymb \wedge \phicount$.
Then, $\mathcal{M}$ has a successful computation.
\end{lemma}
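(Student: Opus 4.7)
The plan is to reconstruct, from any model $w$ of $\phisymb \wedge \phicount$, a successful computation of $\mathcal{M}$. Since $w \models \phisymb$, the word $w$ has the form shown in~(\ref{lab:encoding}) with (P1)--(P5); I take $\pi\colon (0,0) \ttrans{t_1} (m_1,n_1) \ttrans{t_2} \ldots \ttrans{t_k} (m_k,n_k)$. Properties (P1) and (P2) already provide the initial and final transitions together with matching source/target, while (P3) and (P4) handle the zero tests. Two obligations remain: the carryover equalities $m_i' = m_i$ and $n_i' = n_i$ for $i \in \{1,\ldots,k-1\}$, and the counter-update correctness of each step $(m_{i-1},n_{i-1}) \ttrans{t_i} (m_i,n_i)$.

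For the carryover, the conjunct $\G(\Trans \wedge \neg\lasttrans \implies \Psi)$ of $\phicount$ forces $\Psi$ to hold at every transition position $t_i$ with $i < k$. Because $t_\textup{final}$ occurs uniquely in $w$ (by (P1) and the assumption $\trg(t_\textup{final}) \neq \src(t)$ for all $t$), $\Psi$ together with Lemma~\ref{lem:equality}(b) gives $\occnumb{z_i}{B_\asymb} = \occnumb{z_i}{B_\hatasymb}$ for the infix $z_i$ running from $t_i$ to just before $t_\textup{final}$, i.e.\ $\sum_{j=i}^{k-1} m_j' = \sum_{j=i}^{k-1} m_j$. A backward induction on $i$ from $k-1$ down to $1$ then yields $m_i' = m_i$, and the same argument with $\bsymb/\hatbsymb$ in place of $\asymb/\hatasymb$ gives $n_i' = n_i$.

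For the counter updates, the aim is to establish balance (in the sense of Lemma~\ref{lem:equality}(a)) of the local infix $x_i$ that consists of $\asymb^{m_{i-1}'}\bsymb^{n_{i-1}'} t_i \hatasymb^{m_i}\hatbsymb^{n_i}$, extended by the trailing $\sepvar_i$ exactly when $\sepvar_i = \sep_\testz$. Balance of $x_i$ combined with the carryover translates directly into the Minsky step for $t_i$. The zero-test case is immediate: when $\sepvar_i = \sep_\testz$, both $\sepvar_{i-1}$ and $\sepvar_i$ trigger the same $\Phi_\beta$ (inserting a $\sep_\testz$ does not change the next non-$\testz$ separator), and subtracting the two balanced $\Phi_\beta$-infixes provided by Lemma~\ref{lem:equality}(a) leaves exactly $x_i$. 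For the non-$\testz$ transitions $t_{i_1},\ldots,t_{i_p}$ enumerated in (P5b), I proceed by backward induction on $j$ from $p$ down to $1$. Applying $\Phi_\beta$ with $\beta = j \bmod 2$ at $\sepvar_{i_j - 1}$ (which triggers since the next non-$\testz$ separator after it is $\sepvar_{i_j}$ of type $\sep_\beta$) yields balance in the infix running from just after $\sepvar_{i_j-1}$ up to just before the last $\sep_\beta$-position. This infix decomposes into $x_{i_j}$, the later infixes $x_l$ with $l > i_j$ and $l \le i_{l_\beta}$ (where $l_\beta$ is the largest index $\le p$ with $l_\beta \equiv j \pmod{2}$), and the interior non-$\testz$ separators $\sepvar_{i_j},\ldots,\sepvar_{i_{l_\beta-1}}$, which alternate between $\sep_0$ and $\sep_1$ and split into equal numbers of each because $l_\beta - j$ is even. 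By the inductive hypothesis together with the zero-test case, every later $x_l$ is balanced; subtracting these balances and the equally split interior contribution from the overall segment balance leaves balance in $x_{i_j}$.

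The main obstacle is the backward induction in the non-$\testz$ case: because the $\Phi_\beta$-infixes reach all the way to the last $\sep_\beta$-position, the inductive hypothesis must handle every later $x_l$, and the parity argument is essential to make the interior non-$\testz$ separators cancel in the $A_0$ vs $A_1$ coordinate. Once balance of every $x_i$ is secured, the Minsky-step conditions follow directly from Lemma~\ref{lem:equality}(a) combined with the carryover, and together with (P1)--(P4) this exhibits $\pi$ as a successful computation of $\mathcal{M}$.
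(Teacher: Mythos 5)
Your proof is correct and follows essentially the same route as the paper: extract the shape of $w$ from $\phisymb$ via (P1)--(P5), obtain the carryover equalities by applying $\Psi$ at consecutive transition positions and subtracting via Lemma~\ref{lem:equality}(b), and obtain the counter-update balances by a backward argument anchored at the unique reference points $\mathit{last}_\beta$ via Lemma~\ref{lem:equality}(a). The only organizational difference lies in the counter-update step: the paper runs a single backward induction over all $i$ and tracks only the coordinates $A_{\asymb},A_{\hatasymb},A_{\bsymb},A_{\hatbsymb}$ (to which the separators contribute nothing), which renders your separate zero-test case and your parity argument for the interior $\sep_0/\sep_1$ occurrences unnecessary, though both are sound.
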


\begin{proof}
Let $w \in \Sigma^\omega$ be a model of $\phisymb \wedge \phicount$.
Since $w \models \phisymb$, we have $w \in \WF$. Thus, it is of the
form as in equation (\ref{lab:encoding}) such that (P1)--(P5) are satisfied:
\[\begin{array}{r}
\overbrace{~~~~~~~~~~~~~~~\;}^{x_1}
\hspace{1.8em}
\overbrace{~~~~~~~~~~~~~~~\;}^{x_2}
\hspace{10.6em}
\overbrace{~~~~~~~~~~~~~~~~~~~\;\,}^{x_{k-1}}
\hspace{2.7em}
\overbrace{~~~~~~~~~~~~~~~~\,}^{x_{k}}
\hspace{3.5em}
\\[-0.7ex]
\sepvar_0\;
\underbrace{u_0}_{C_0}
\underbrace{
\;t_1\;
\underbrace{v_1}_{C_1}\;
\sepvar_{1}\;
\underbrace{u_1}_{C_1'}}_{y_1}
\;t_2\;
\underbrace{v_2}_{C_2}
\;
\sepvar_{2}\;
\ldots\;
\underbrace{
\;t_{k-2}\;
\underbrace{v_{k-2}}_{C_{k-2}}
\sepvar_{k-2}\;
\underbrace{u_{k-2}}_{C_{k-2}'}}_{y_{k-2}}
\underbrace{
\;t_{k-1}\;
\underbrace{v_{k-1}}_{C_{k-1}}
\;
\sepvar_{k-1}\;
\underbrace{u_{k-1}}_{C_{k-1}'}}_{y_{k-1}}
\;t_k\;
\underbrace{v_k}_{C_k}\;
\sepvar_{k}\;
\lastsymb^\omega
\end{array}\]
where
$u_0 = \asymb^{\valone_0}\bsymb^{\valtwo_0}$,
$v_i = \hatasymb^{\valone_i}\hatbsymb^{\valtwo_i}$ for $i \in \{1,\ldots,k\}$, and
$u_i = \asymb^{\valone_i'}\bsymb^{\valtwo_i'}$ for $i \in \{1,\ldots,k-1\}$.
As indicated above, we define words $x_i = u_{i-1}t_iv_i$ (for $i \in \{1,\ldots,k\}$)
and $y_i = t_iv_i\sepvar_iu_i$ (for $i \in \{1,\ldots,k-1\}$) as well as
counter configurations $C_i = (\valone_i,\valtwo_i)$
and $C_i' = (\valone_i',\valtwo_i')$ of $\mathcal{M}$.
Moreover, we let $y_k = \epsilon$.

By (P2), we have $\trg(t_{i-1}) = \src(t_{i})$ for all $i \in \{2,\ldots,k\}$.
We will show that,
for all $i \in \{1,\ldots,k\}$, we have $C_{i-1} \ttrans{t_i} C_{i} = C_{i}'$.
In particular, we still have to check that the counter values are updated according
to $\opt(t_i)$.

We first prove the following claim:

\begin{claim}\label{claim:equal-x}
For all $i \in \{1,\ldots,k\}$,
$\occnumb{x_i}{A_\asymb} = \occnumb{x_i}{A_\hatasymb}$ and $\occnumb{x_i}{A_\bsymb} = \occnumb{x_i}{A_\hatbsymb}$.
\end{claim}

\begin{proof}[Proof of Claim~\ref{claim:equal-x}]
\renewcommand{\qedsymbol}{$\Diamond$}
Recall that $t_k \not\in \Trans_\testz$ (by convention)
so that $\sepvar_k \in \{\sep_0,\sep_1\}$.
Let $\beta \in \{0,1\}$, $i,j \in \{1,\ldots,k\}$, and
\[\begin{array}{l}
~~~~~~~~~~~~~~
\overbrace{~~~~~~~~~~~~~~\;}^{x_i}
~~~~~~~~~~~~~~~~~~~~~~~~~~~~~~~~~~~~~~~~~~~
\overbrace{~~~~~~~~~~~~~~~\;}^{x_k}
\\[-1ex]
w_i = \sigma_{i-1}\;
\underbrace{
\underbrace{
\underbrace{u_{i-1}}_{C_{i-1}} \;t_{i}\;
\underbrace{v_{i}}_{C_{i}}\;
\ldots\;
}_{=: z_i}
\sepvar_{j}\;
\underbrace{u_{j}}_{C_{j}} \;t_{j+1}\;
\underbrace{v_{j}}_{C_{j+1}}\;
\ldots\;
\sepvar_{k-1}\;
\underbrace{u_{k-1}}_{C_{k-1}} \;t_k\;
\underbrace{v_k}_{C_k}\;
}_{=: z_i'}
\sepvar_{k}\;
\lastsymb^\omega\,.
\end{array}\]
such that
$\sepvar_{j} \in A_{1-\beta}$
and $\sepvar_k \in A_\beta$ are
the last occurrences of
$\sep_{1-\beta}$ and $\sep_\beta$, respectively.
We proceed by induction, showing the statement for
$x_k$, then for $x_{k-1}$ and so on.
In the base case $i = k$, we have $j < i$ and therefore consider $z_i = \epsilon$.
We distinguish two (very similar) cases:
\begin{description}
\item[Case $w_i \models \mathit{next}_{1-\beta}$:]
Since $w_i \models \Phi_{1-\beta}$ and by Lemma~\ref{lem:equality}(a), we have $\occnumb{z_i}{A_{\asymb}} = \occnumb{z_i}{A_{\hatasymb}}$ and
$\occnumb{z_i}{A_{\bsymb}} = \occnumb{z_i}{A_{\hatbsymb}}$.
For $\tau \in \{\asymb,\hatasymb,\bsymb,\hatbsymb\}$, let
$D_\tau = \sum_{p =i+1}^{j}\occnumb{x_p}{A_{\type}}$.
By induction hypothesis (or when $i \ge j$),
$D_\asymb = D_\hatasymb$ and $D_\bsymb = D_\hatbsymb$.
We have
\begin{align*}
\occnumb{x_i}{A_{\asymb}} &= \occnumb{z_i}{A_{\asymb}} - D_\asymb\\
\occnumb{x_i}{A_{\hatasymb}} &= \occnumb{z_i}{A_{\hatasymb}} - D_\hatasymb\\
\occnumb{x_i}{A_{\bsymb}} &= \occnumb{z_i}{A_{\bsymb}} - D_\bsymb\\
\occnumb{x_i}{A_{\hatbsymb}} &= \occnumb{z_i}{A_{\hatbsymb}} - D_\hatbsymb\,,
\end{align*}
which implies
$\occnumb{x_i}{A_\asymb} = \occnumb{x_i}{A_\hatasymb}$ and
$\occnumb{x_i}{A_\bsymb} = \occnumb{x_i}{A_\hatbsymb}$.

\item[Case $w_i \models \mathit{next}_{\beta}$:]
This case is almost identical.
We only replace $1-\beta$ with $\beta$, $z_i$ with $z_i'$, and $j$ with $k$.
\end{description}

Thus, in both cases, we indeed obtain $\occnumb{x_i}{A_\asymb} = \occnumb{x_i}{A_\hatasymb}$ and
$\occnumb{x_i}{A_\bsymb} = \occnumb{x_i}{A_\hatbsymb}$.
\end{proof}

We now show the corresponding claim for carryovers:

\begin{claim}\label{claim:equal-y}
For all $i \in \{1,\ldots,k-1\}$,
$\occnumb{y_i}{\{\asymb\}} = \occnumb{y_i}{\{\hatasymb\}}$ and
$\occnumb{y_i}{\{\bsymb\}} = \occnumb{y_i}{\{\hatbsymb\}}$.
\end{claim}

\begin{proof}[Proof of Claim~\ref{claim:equal-y}]
\renewcommand{\qedsymbol}{$\Diamond$}
Let $i \in \{1,\ldots,k-1\}$. We have
\[
\underbrace{
y_{i}
\underbrace{
y_{i+1} \ldots y_{k-1}
}_{=: y}
}_{=: y'}
 \,t_{\textup{final}}\, v_k \sepvar_k \#^\omega ~\models
\displaystyle
\bigwedge_{
\substack{
(\type_1,\type_2,\type_3) \\
\in \Btuples
}}
\bigl((B_{\type_1} \vee B_{\type_2} \vee B_{\type_3}) \myU{\frac{1}{2}} t_{\textup{final}}\bigr)\,.\]
By Lemma~\ref{lem:equality}(b), we obtain
$\occnumb{y'}{\{\asymb\}} = \occnumb{y'}{\{\hatasymb\}}$ and
$\occnumb{y'}{\{\bsymb\}} = \occnumb{y'}{\{\hatbsymb\}}$.
Since we also have
$y_{i+1} \ldots y_{k-1}
 \,t_{\textup{final}}\, v_k \sepvar_k \#^\omega \models
\Psi$
and by Lemma~\ref{lem:equality}(b), we get
$\occnumb{y}{\{\asymb\}} = \occnumb{y}{\{\hatasymb\}}$ and
$\occnumb{y}{\{\bsymb\}} = \occnumb{y}{\{\hatbsymb\}}$.
Further, note that $\occnumb{y_i}{B_\type} = \occnumb{y'}{B_\type} - \occnumb{y}{B_\type}$
for all $\type \in \Btypes$.
We obtain $\occnumb{y_i}{\{\asymb\}} = \occnumb{y_i}{\{\hatasymb\}}$ and
$\occnumb{y_i}{\{\bsymb\}} = \occnumb{y_i}{\{\hatbsymb\}}$ as desired.
\end{proof}

By Claim~\ref{claim:equal-y}, we have $C_i = C_i'$ for all $i \in \{1,\ldots,k-1\}$.
Thus, the carryover is performed correctly.
Using Claim~\ref{claim:equal-x}, we can now take care of the counter updates and show:
\begin{claim}\label{claim:counter}
For all $i \in \{1,\ldots,k\}$, we have $(\valone_{i-1},\valtwo_{i-1}) \ttrans{\opt(t_i)} (\valone_{i},\valtwo_{i})$.
\end{claim}

\begin{proof}[Proof of Claim~\ref{claim:counter}]
\renewcommand{\qedsymbol}{$\Diamond$}
Suppose $i \in \{1,\ldots,k\}$.
We distinguish several cases:
\begin{description}
\item[Case $\opt(t_{i}) = \inc_1$:]
We have $t_{i} \in A_\asymb$.
By Claims~\ref{claim:equal-x} and \ref{claim:equal-y},
we get
$\valone_{i} = \valone_{i-1} + 1$ (as required for the increment of counter~1)
as well as $\valtwo_{i} = \valtwo_{i-1}$ (the second counter value remains unchanged). 

\item[Case $\opt(t_{i}) = \inc_2$:]
We have $t_{i} \in A_\bsymb$.
By Claims~\ref{claim:equal-x} and \ref{claim:equal-y},
$\valone_{i} = \valone_{i-1}$ and $\valtwo_{i} = \valtwo_{i-1} + 1$.

\item[Case $\opt(t_{i}) = \dec_1$:]
We have $t_{i} \in A_\hatasymb$.
By Claims~\ref{claim:equal-x} and \ref{claim:equal-y},
$\valone_{i} = \valone_{i-1} - 1$ and
$\valtwo_{i} = \valtwo_{i-1}$.

\item[Case $\opt(t_{i}) = \dec_2$:]
We have $t_{i} \in A_\hatbsymb$.
By Claims~\ref{claim:equal-x} and \ref{claim:equal-y},
$\valone_{i} = \valone_{i-1}$ and
$\valtwo_{i} = \valtwo_{i-1} - 1$.

\item[Case $\opt(t_{i}) = \testz_1$:]
We have $t_{i} \in A_\testz$.
By Claims~\ref{claim:equal-x} and \ref{claim:equal-y},
$\valone_{i} = \valone_{i-1}$ and
$\valtwo_{i} = \valtwo_{i-1}$. Moreover, by (P3), $\valone_{i} = 0$.

\item[Case $\opt(t_{i}) = \testz_2$:]
We have $t_{i} \in A_\testz$.
By Claims~\ref{claim:equal-x} and \ref{claim:equal-y},
$\valone_{i} = \valone_{i-1}$ and
$\valtwo_{i} = \valtwo_{i-1}$. Moreover, by (P4), $\valtwo_{i} = 0$.
\qedhere
\end{description}
\end{proof}

Altogether, we have shown that,
for all $i \in \{1,\ldots,k\}$, we have $C_{i-1} \ttrans{t_i} C_{i}$.
This concludes the proof of Lemma~\ref{lem:model-comp}.
\end{proof}

\bibliographystyle{plain}
\bibliography{main}

\begin{thebibliography}{1}

\bibitem{abs-2007-01233}
Bartosz Bednarczyk and Jakub Michaliszyn.
\newblock {``Most of'' leads to undecidability: Failure of adding frequencies
  to {LTL}}.
\newblock {\em CoRR}, abs/2007.01233, 2020.

\bibitem{BolligDL12}
Benedikt Bollig, Normann Decker, and Martin Leucker.
\newblock Frequency Linear-time Temporal Logic.
\newblock In Tiziana Margaria, Zongyan Qiu, and Hongli Yang, editors, {\em 6th
  International Symposium on Theoretical Aspects of Software Engineering
  ({TASE} 2012)}, pages 85--92. {IEEE} Computer Society, 2012.

\bibitem{BouyerMM14}
Patricia Bouyer, Nicolas Markey, and Raj~Mohan Matteplackel.
\newblock Averaging in {LTL}.
\newblock In Paolo Baldan and Daniele Gorla, editors, {\em 25th International
  Conference on Concurrency Theory ({CONCUR} 2014)}, volume 8704 of {\em
  Lecture Notes in Computer Science}, pages 266--280. Springer, 2014.

\bibitem{ForejtK15}
Vojtech Forejt and Jan Krc{\'{a}}l.
\newblock {On Frequency {LTL} in Probabilistic Systems}.
\newblock In Luca Aceto and David de~Frutos{-}Escrig, editors, {\em 26th
  International Conference on Concurrency Theory ({CONCUR} 2015)}, volume~42 of
  {\em LIPIcs}, pages 184--197. Schloss Dagstuhl - Leibniz-Zentrum f{\"{u}}r
  Informatik, 2015.

\bibitem{ForejtKK15}
Vojtech Forejt, Jan Krc{\'{a}}l, and Jan Kret{\'{\i}}nsk{\'{y}}.
\newblock {Controller Synthesis for MDPs and Frequency
  LTL}\(_{\text{\textbackslash GU}}\).
\newblock In Martin Davis, Ansgar Fehnker, Annabelle McIver, and Andrei
  Voronkov, editors, {\em 20th International Conference on Logic for
  Programming, Artificial Intelligence, and Reasoning ({LPAR-20} 2015), 2015,
  Proceedings}, volume 9450 of {\em Lecture Notes in Computer Science}, pages
  162--177. Springer, 2015.

\bibitem{HoenickeMO10}
Jochen Hoenicke, Roland Meyer, and Ernst{-}R{\"{u}}diger Olderog.
\newblock {Kleene, Rabin, and Scott Are Available}.
\newblock In Paul Gastin and Fran{\c{c}}ois Laroussinie, editors, {\em 21st
  International Conference on Concurrency Theory ({CONCUR} 2010)}, volume 6269
  of {\em Lecture Notes in Computer Science}, pages 462--477. Springer, 2010.

\bibitem{minsky_67}
Marvin~L. Minsky.
\newblock {\em Computation: Finite and Infinite Machines}.
\newblock Prentice-Hall Series in Automatic Computation. Prentice-Hall, 1967.

\bibitem{PiribauerB20}
Jakob Piribauer and Christel Baier.
\newblock {On Skolem-Hardness and Saturation Points in Markov Decision
  Processes}.
\newblock In Artur Czumaj, Anuj Dawar, and Emanuela Merelli, editors, {\em 47th
  International Colloquium on Automata, Languages, and Programming ({ICALP}
  2020)}, volume 168 of {\em LIPIcs}, pages 138:1--138:17. Schloss Dagstuhl -
  Leibniz-Zentrum f{\"{u}}r Informatik, 2020.

\end{thebibliography}

\end{document}